\newcommand{\eqref}[1]{\eref{#1}}
\theoremstyle{plain}
\newtheorem{thm}{Theorem}
\newtheorem{lemma}[thm]{Lemma}
\newtheorem{cor}[thm]{Corollary}
\newcommand{\TODO}[2][10]{%
  \todocounter0%
  \loop\ifnum\todocounter < #1%
    #2 %
    \advance\todocounter by 1%
  \repeat%
}
\newcommand{\id}[0]{~ \rmd}
\newcommand{\ee}[0]{\rme}
\newcommand{\ii}[0]{\rmi}
\newcommand{\Hilbert}{\mathcal H}
\newcommand{\Operators}{\mathcal S}
\newcommand{\ket}[1]{| #1 \rangle}
\newcommand{\ketbra}[2]{| #1 \rangle\!\langle #2 |}
\newcommand{\abs}[1]{\vert#1\vert}
\newcommand{\norm}[1]{\Vert#1\Vert}
\newcommand{\Norm}[1]{\left\Vert#1\right\Vert}
\newcommand{\sk}[2]{\langle #1, #2 \rangle}
\newcommand{\comm}[2]{[ #1, #2 ]}
\newcommand{\trbr}[1]{\tr[#1]}
\newcommand{\lspan}[1]{\operatorname{span}\{ #1 \}}
\begin{document}

\title{Limit cycles in periodically driven open quantum systems}

\author{Paul Menczel$^1$ and Kay Brandner$^1$}

\address{$^1$ Department of Applied Physics, Aalto University, 00076 Aalto, Finland}
\ead{paul.menczel@aalto.fi}

\begin{abstract}
We investigate the long-time behavior of quantum $N$-level systems that are coupled to a Markovian environment and subject to periodic driving.
As our main result, we obtain a general algebraic condition ensuring that all solutions of a periodic quantum master equation with Lindblad form approach a unique limit cycle. Quite intuitively, this criterion requires that the dissipative terms of the master equation connect all subspaces of the system Hilbert space during an arbitrarily small fraction of the cycle time.
Our results provide a natural extension of Spohn's \emph{algebraic condition for the approach to equilibrium} to systems with external driving.
\end{abstract}

\section{Introduction and Main Result}

The theory of open quantum systems provides us with powerful tools to describe the dynamics of quantum-mechanical objects that interact with a macroscopic environment \cite{Breuer2002}.
A cornerstone of this framework is the Gorini–Kossakowski–Sudarshan–Lindblad (GKSL) equation,
\begin{equation} \label{eq:lindblad}
	\partial_t \rho_t = \hat L_t \rho_t \equiv \frac{1}{\ii\hbar} \comm{H_t}{\rho_t}
		+ \sum_{\mu=1}^{M_t} \gamma^\mu_t \Big( A^\mu_t \rho_t A^{\mu\dagger}_t - \frac 1 2 A^{\mu\dagger}_t A^\mu_t \rho_t - \frac 1 2 \rho_t A^{\mu\dagger}_t A^\mu_t \Big) .
\end{equation}
Under certain technical conditions, this equation of motion defines the most general Markovian, i.e., memoryless, time evolution of a physical state $\rho_t$ \cite{GoriniJMathPhys1976,LindbladCommunMathPhys1976,ChruscinskiJPhysB2012,Rivas2012}.
The self-adjoint operator $H_t$ is thereby usually identified with the effective Hamiltonian of the system, while the influence of the environment is encoded in the Lindblad operators $A^\mu_t$ with corresponding coupling rates $\gamma^\mu_t > 0$; $\hbar$ denotes Planck's constant.
Notably, all components of the Lindblad generator $\hat L_t$, including the number of dissipation channels $M_t$, can be time-dependent if the system is externally driven.

Owing to its high degree of generality, the GKSL equation has found application in nearly all areas of modern quantum physics ranging from quantum optics \cite{Scully1997} and quantum information theory \cite{Nielsen2000,GooldJPhysA2016} over quantum thermodynamics \cite{KosloffEntropy2013,VinjanampathyContempPhys2016,BenentiPhysRep2017} to quantum device engineering \cite{PekolaNatPhys2015}.
The general mathematical properties of this equation, which have been extensively studied over the last decades \cite{GoriniRepMathPhys1978,SpohnJMathPhys1978,AlickiCommunMathPhys1987,Alicki2007,SchirmerPhysRevA2010,AlbertPhysRevX2016,BreuerRevModPhys2016}, have thus become a source of valuable physical insight.
A particularly important problem in this context is to characterize the long-time behavior of its solutions.
For systems without external driving, which are described by a time-independent Lindblad generator, the conditions that lead to a unique stationary limiting state $\rho^{\mathrm{ss}}$ are well understood \cite{Rivas2012}.
In particular, Spohn proved a general criterion that, quite intuitively, requires the dissipative terms of the Lindblad generator to connect all subspaces of the system Hilbert space \cite{SpohnLettMathPhys1977}.
This result can be formulated as follows:

\begin{thm}[Spohn] \label{thm:spohn}
	For a system obeying the GKSL equation \eqref{eq:lindblad} with time-independent Lindblad generator $\hat L$, let $\mathcal A \equiv \lspan{ A^1 \dots A^M }$ be the linear span of all Lindblad operators.
	If $\mathcal A$ is self-adjoint and irreducible, then there exists a unique state $\rho^{\mathrm{ss}}$ so that
	\begin{equation}
		\lim_{t \to \infty} \left( \rho_t - \rho^{\mathrm{ss}} \right) = 0
	\end{equation}
	for arbitrary initial conditions $\rho_0$.
	Here, $\mathcal A$ is self-adjoint if $X^\dagger \in \mathcal A$ for all $X \in \mathcal A$ and irreducible if $\comm X Y = 0$ for all $X \in \mathcal A$ implies that $Y$ is a multiple of the identity.
\end{thm}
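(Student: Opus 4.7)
The plan is to follow Spohn's original strategy based on the monotonicity of the quantum relative entropy under completely positive, trace-preserving (CPTP) maps. The key ingredients I rely on are: (i) the semigroup $T_t \equiv \exp(t \hat L)$ consists of CPTP maps, which is a direct consequence of the GKSL form; (ii) Lindblad's monotonicity theorem, i.e.\ $S(T_t \rho \Vert T_t \sigma) \leq S(\rho \Vert \sigma)$ for states of matched support; and (iii) Spohn's explicit expression for the entropy production rate, through which the algebraic structure of $\mathcal A$ will enter the argument.

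First I would establish existence of a stationary state $\rho^{\mathrm{ss}}$ with $\hat L \rho^{\mathrm{ss}} = 0$ by applying Brouwer's fixed point theorem to the continuous action of $T_{t_0}$ (for some fixed $t_0 > 0$) on the compact convex set of density operators, and then using the semigroup property to upgrade the fixed point to a stationary state of the full flow. Next I would show that $\rho^{\mathrm{ss}}$ is faithful: letting $P$ denote its support projector, stationarity together with positivity of the rates $\gamma^\mu$ forces $(\mathbbm 1 - P) A^\mu P = 0$ for every $\mu$; self-adjointness of $\mathcal A$ then turns the range of $P$ into a common invariant subspace of every element of $\mathcal A$, and irreducibility yields $P = \mathbbm 1$. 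As a consequence, $S(\rho_t \Vert \rho^{\mathrm{ss}})$ is finite for every initial state $\rho_0$.

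The heart of the argument is that $S(\rho_t \Vert \rho^{\mathrm{ss}})$ is non-increasing in $t$ by (ii), bounded below by zero, and hence admits a limit. A direct computation yields Spohn's formula for the entropy production $\sigma(\rho) \equiv -\frac{\dd}{\dt} S(\rho_t \Vert \rho^{\mathrm{ss}}) \big|_{t=0}$ as a sum of non-negative contributions, one per Lindblad channel, each of which vanishes only when $\rho$ satisfies a specific operator identity relating $A^\mu$, $\rho$, and $\rho^{\mathrm{ss}}$. Combining these identities over all $\mu$ and invoking self-adjointness of $\mathcal A$ promotes them to commutation relations with every element of $\mathcal A$, after which irreducibility forces $\rho$ to be proportional to $\rho^{\mathrm{ss}}$ and, by normalization, to coincide with it. In particular, any further candidate stationary state must already be equal to $\rho^{\mathrm{ss}}$, which gives uniqueness.

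Genuine convergence $\rho_t \to \rho^{\mathrm{ss}}$ then follows from a LaSalle-type argument: by compactness of the state space every trajectory has $\omega$-limit points, and along the orbit through any such point the relative entropy is constant, so $\sigma$ vanishes there. The previous step identifies the entire $\omega$-limit set with $\{\rho^{\mathrm{ss}}\}$, and compactness promotes this to full convergence. The main obstacle of the whole scheme is the equality analysis in the third paragraph: one has to derive the precise algebraic condition imposed by $\sigma(\rho) = 0$ on each individual channel and then show that, combined with self-adjointness of $\mathcal A$, it is compatible with irreducibility only if $\rho = \rho^{\mathrm{ss}}$. This is the step where the full strength of both hypotheses on $\mathcal A$ is required.
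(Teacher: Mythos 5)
Your route is genuinely different from the one used in the paper (and in Spohn's original article), which is spectral rather than entropic: there one expands the dissipator in a Hermitian basis of $\mathcal A$, splits off a ``diagonal'' generator $\hat L_{\mathrm{d}}$ whose restriction to traceless operators is negative definite by irreducibility, and combines the resulting bound $\norm{\exp[\hat L_{\mathrm{d}}' t]}_1' \leq \ee^{-\lambda t}$ with contractivity of the remainder via the Lie product formula to conclude that every nonzero eigenvalue of $\hat L$ has negative real part. Your existence and faithfulness steps are sound: a fixed-point (or Ces\`aro-average) argument gives a stationary state, and the support-projection argument correctly uses self-adjointness and irreducibility of $\mathcal A$ to force $P = \mathbbm 1$.

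The gap sits exactly at the step you flag as the main obstacle, and it is not merely technical. The entropy production $\sigma(\rho) = -\frac{\dd}{\dt}S(\rho_t\Vert\rho^{\mathrm{ss}})$ of a general GKSL generator does \emph{not} decompose into a sum of non-negative contributions, one per Lindblad channel. Its non-negativity is known only globally, as a consequence of Lindblad's monotonicity applied to the full semigroup, because the argument uses stationarity of $\rho^{\mathrm{ss}}$ under the \emph{total} generator; the individual channels $\hat L_\mu$ do not annihilate $\rho^{\mathrm{ss}}$ separately, so the terms $-\trbr{(\hat L_\mu\rho)(\log\rho - \log\rho^{\mathrm{ss}})}$ can have either sign, and even the Hamiltonian contribution $\trbr{\rho\,\comm{\log\rho^{\mathrm{ss}}}{H}}/\ii\hbar$ survives unless $\comm{H}{\rho^{\mathrm{ss}}} = 0$, which is not assumed. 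A channel-resolved non-negative decomposition with a clean equality condition is available for detailed-balance generators, but not under the hypotheses of the theorem. Consequently the ``per-channel operator identities'' you want to promote to commutation relations are not there to be had, and the equality analysis $\sigma(\rho)=0 \Rightarrow \rho = \rho^{\mathrm{ss}}$ --- the one step that must consume both hypotheses on $\mathcal A$ --- remains unproven. To rescue this route you would need the equality condition in the data-processing inequality (Petz recoverability along the orbit through each $\omega$-limit point) or, equivalently, the Frigerio--Verri result that with a faithful stationary state the peripheral eigenoperators of $\hat L^\dagger$ lie in the commutant of $\{A^\mu, A^{\mu\dagger}\}$, which irreducibility then trivializes; that machinery is substantially heavier than your sketch suggests and is precisely what your proposal omits.
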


The central aim of this letter is to extend this theorem to periodically driven systems, whose Lindblad generator is explicitly time-dependent and obeys the condition $\hat L_{t+T} = \hat L_t$ for some cycle time $T$.
Such systems are commonly studied in quantum thermodynamics, for example as models of cyclic thermal machines \cite{KosloffEntropy2013,BrandnerPhysRevE2016,VinjanampathyContempPhys2016}.

When a periodic driving protocol is constantly applied to a dissipative system, we intuitively expect its state $\rho_t$ to approach a periodic limit cycle satisfying $\rho^{\mathrm{cy}}_{t+T} = \rho^{\mathrm{cy}}_t$.
This expectation can be motivated using the following argument given in \cite{KosloffEntropy2013}.
The relative entropy between two states $\rho_t$ and $\sigma_t$ is defined as
\begin{equation}
	S(\rho_t \Vert \sigma_t) = -\trbr{\rho_t\, (\log \rho_t - \log \sigma_t)} .
\end{equation}
Being non-negative and zero only if $\rho_t = \sigma_t$, this quantity can be understood as a distance measure on the state space of a given system.
The dynamics induced by the GKSL equation can only decrease the relative entropy between two states, i.e., $S(\rho_\tau \Vert \sigma_\tau) \leq S(\rho_t \Vert \sigma_t)$ for any time $\tau > t$ \cite{Breuer2002}.
Using this result with $\sigma_t = \rho_{t+T}$ and $\tau = t+T$ yields $S(\rho_{t+T} \Vert \rho_{t+2T}) \leq S(\rho_{t} \Vert \rho_{t+T})$.
Hence, the relative entropy between $\rho_t$ and $\rho_{t+T}$ decreases with every period.

This argument implies that the relative entropy between $\rho_t$ and $\rho_{t+T}$ converges to a limit value at long times.
We can, however, not conclude that this value is zero, i.e., that $\rho_t = \rho_{t+T}$.
In fact, there are periodically driven open quantum systems described by GKSL equations, for which the period length of the system response at long times is an integer multiple of the applied cycle time $T$, see for example \cite{GambettaArXiv190508826Cond-Matstat-Mech2019}.
Furthermore, the argument given above cannot be used to address the uniqueness of the limit cycle.
It thus remains unclear which conditions must be met for the long-time solution of \eqref{eq:lindblad} to be a unique limit cycle satisfying $\rho^{\mathrm{cy}}_{t+T} = \rho^{\mathrm{cy}}_t$.
Here, we show that, to this end, it suffices that the periodic Lindblad generator $\hat L_t$ satisfies the requirements of Theorem~\ref{thm:spohn} for an arbitrarily small fraction of the period $T$.
As our main result, we obtain the following theorem:

\begin{thm} \label{thm:us}
	Consider a system obeying the GKSL equation \eqref{eq:lindblad} with periodic Lindblad generator, $\hat L_{t+T} = \hat L_t$.
	Assume there is a $\tau \in (0, T]$ so that $\hat L_t$ is continuous on $[0,\tau]$, and $\mathcal A_t \equiv \lspan{ A^1_t \dots A^{M_t}_t }$ is self-adjoint and irreducible for all $t \in [0, \tau]$.
	Then there exists a unique limit cycle $\rho^{\mathrm{cy}}_t = \rho^{\mathrm{cy}}_{t+T}$ so that
	\begin{equation}
		\lim_{t \to \infty} \left( \rho_t - \rho^{\mathrm{cy}}_t \right) = 0
	\end{equation}
	for arbitrary initial conditions $\rho_0$.
\end{thm}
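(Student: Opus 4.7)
The plan is to work with the one-period propagator $\hat U: \rho_0 \mapsto \rho_T$, which is a completely positive, trace-preserving linear map on the finite-dimensional operator space. A periodic limit cycle corresponds precisely to a fixed state of $\hat U$, so the theorem reduces to showing that such a fixed state is unique and globally attracting. Existence of at least one fixed state is immediate from Brouwer's fixed-point theorem applied to the compact convex set of density matrices, which $\hat U$ maps into itself, yielding a candidate $\rho^{\mathrm{cy}}_0$ and the periodic orbit $\rho^{\mathrm{cy}}_t$.

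The central analytic ingredient is a time-dependent sharpening of Spohn's Lyapunov argument based on the relative entropy. For any two solutions $\rho_t, \sigma_t$ of \eqref{eq:lindblad} with $\mathrm{supp}(\rho_0) \subseteq \mathrm{supp}(\sigma_0)$, I would differentiate $S(\rho_t \Vert \sigma_t)$ along the evolution; following the pointwise calculation of Spohn and Lindblad, the result is a non-positive quadratic form in the Lindblad operators $A^\mu_t$ and the logarithmic difference $\log\rho_t - \log\sigma_t$. Under self-adjointness and irreducibility of $\mathcal A_t$, vanishing of this dissipation at $t$ forces $\log\rho_t - \log\sigma_t$ to commute with every element of $\mathcal A_t$, hence to be a multiple of the identity, which yields $\rho_t = \sigma_t$. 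Continuity of $\hat L_t$ on $[0, \tau]$ then promotes this pointwise statement to the strict inequality $S(\rho_\tau \Vert \sigma_\tau) < S(\rho_0 \Vert \sigma_0)$ whenever $\rho_0 \neq \sigma_0$. A parallel algebraic argument applied to the support projection of any $\hat U$-fixed state shows that this projection commutes with all $A^\mu_t$ for $t \in [0, \tau]$ and hence equals the identity; consequently every fixed state of $\hat U$ has full rank.

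Uniqueness and convergence follow cleanly from these two ingredients. If $\tilde\rho^{\mathrm{cy}}_0$ were a second fixed state, $S(\tilde\rho^{\mathrm{cy}}_0 \Vert \rho^{\mathrm{cy}}_0)$ would be finite by the full-support property and strictly smaller after one period by strict contraction on $[0, \tau]$ combined with monotonicity on $[\tau, T]$, contradicting periodicity. For an arbitrary initial state $\rho_0$, the sequence $S(\rho_{nT} \Vert \rho^{\mathrm{cy}}_0)$ is non-negative and non-increasing, hence convergent; by compactness of the state space any accumulation point $\rho_\infty$ of $\rho_{nT}$ satisfies $S(\hat U \rho_\infty \Vert \rho^{\mathrm{cy}}_0) = S(\rho_\infty \Vert \rho^{\mathrm{cy}}_0)$, which by the strict contraction forces $\rho_\infty = \rho^{\mathrm{cy}}_0$. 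Continuity of the intra-period propagator then lifts $\rho_{nT} \to \rho^{\mathrm{cy}}_0$ to $\rho_t \to \rho^{\mathrm{cy}}_t$.

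The hardest part is expected to be the strict-contraction step, because Spohn's original derivation identifies the zero set of the entropy dissipation using a time-independent generator: transplanting it to the time-dependent setting requires both a careful pointwise analysis on $[0, \tau]$ and a use of continuity of $\hat L_t$ to convert pointwise strictness into strictness across the interval. The companion claim that every $\hat U$-fixed state has full support is similarly delicate, since there is no single stationary state of a single generator to invoke; Spohn's algebraic argument must be recast for the composite propagator $\hat U_{\tau, 0}$, using that the support projection of a $\hat U$-fixed state is preserved only if it is compatible with the Lindblad operators throughout $[0, \tau]$.
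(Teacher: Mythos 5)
Your route is genuinely different from the paper's: you work in the Schr\"odinger picture with the one-period map $\hat U$, a Brouwer fixed point, and relative entropy as a strict Lyapunov function, whereas the paper works in the Heisenberg picture and shows that the dual norm $\norm{X'_t}_\infty$ of the traceless component of any observable contracts by a factor $\ee^{-\Lambda\tau}$ per period, assembling this from per-time-slice trace-norm bounds on $\exp[\hat L_t'\,\Delta t]$ via the Lie product formula. The outer layers of your argument (existence via Brouwer, the LaSalle-type argument that accumulation points of $\rho_{nT}$ must be fixed points of $\hat U$, lifting $\rho_{nT}\to\rho^{\mathrm{cy}}_0$ to $\rho_t\to\rho^{\mathrm{cy}}_t$) are sound given your two ingredients. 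The problem is that the central ingredient is not established, and the proposed method for establishing it does not transplant as claimed.

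Concretely: Spohn's entropy-production formula expresses $-\partial_t S(\rho_t\Vert\sigma)$ as a manifestly non-negative expression only when the reference state $\sigma$ is \emph{stationary}; its derivation uses $\partial_t\log\sigma = 0$. For two states that both evolve under the same (time-dependent) generator, $\partial_t S(\rho_t\Vert\sigma_t)$ contains the term $-\trbr{\rho_t\,\partial_t\log\sigma_t}$ with $\partial_t\log\sigma_t = \int_0^\infty(\sigma_t+s)^{-1}\dot\sigma_t(\sigma_t+s)^{-1}\,\dd s$, and there is no known rearrangement of the result into a non-positive quadratic form in $\comm{\log\rho_t-\log\sigma_t}{A^\mu_t}$; the general monotonicity $\partial_t S(\rho_t\Vert\sigma_t)\le 0$ is proved via operator convexity (Lindblad/Uhlmann), which yields no explicit kernel characterization. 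Hence your key step --- that vanishing dissipation forces $\log\rho_t-\log\sigma_t$ to commute with all of $\mathcal A_t$ and thus $\rho_t=\sigma_t$ --- is precisely the statement that needs proof, and the cited ``pointwise calculation'' does not supply it. (Characterizing equality in data processing for the finite-time map $\hat V_{\tau,0}$ via Petz recoverability would be an alternative, but relating that condition to irreducibility of $\mathcal A_t$ is again the whole difficulty.) The paper itself points out in the introduction that the relative-entropy argument, as it stands, cannot exclude subharmonic responses or establish uniqueness; its actual proof deliberately avoids the equality-case analysis by reducing everything to the spectral gap $\lambda_t>0$ of the frozen generator $\hat L_{\mathrm{d},t}$, which is already furnished by the time-independent theory, and by using continuity of $\hat L_t$ only to obtain a uniform $\Lambda = \min_{t\in[0,\tau]}\lambda_t > 0$. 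Your secondary claim that every $\hat U$-fixed state has full support has a similar (though likely repairable) gap, since the support of $\rho^{\mathrm{cy}}_t$ need not be constant on $[0,\tau]$ and the invariance argument must be run for the whole time-ordered propagator rather than a single generator.
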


We proceed as follows.
After fixing the notation in \Sref{sec:notation}, we briefly review the proof of Theorem~\ref{thm:spohn} in \Sref{sec:no_driving}.
In \Sref{sec:driving}, we derive our main result.
Further aspects are considered in our concluding \Sref{sec:conclusions}.

\section{Setup and Notation} \label{sec:notation}

We consider an open quantum system with finite-dimensional Hilbert space $\Hilbert$.
The space of linear, self-adjoint operators $X: \Hilbert \to \Hilbert$ is denoted by $\Operators$ and equipped with the trace-norm
\begin{equation}
	\norm{X}_1 \equiv \sum_{\lambda \in \sigma(X)} \abs{\lambda} ,
\end{equation}
where $\sigma(X)$ is the set of eigenvalues of $X$.
Upon introducing the Hilbert-Schmidt inner product,
\begin{equation} \label{eq:hs}
	\sk X Y \equiv \trbr{ X Y } ,
\end{equation}
$\Operators$ becomes a Hilbert space \cite{Reed1981}.
Linear operators $\hat V: \Operators \to \Operators$ are called superoperators and indicated by hats; $\hat V^\dagger$ denotes the Hermitian adjoint of $\hat V$ with respect to the scalar product \eqref{eq:hs}.
The operator norm of a superoperator with respect to the trace-norm is defined as
\begin{equation}
	\norm{\hat V}_1 \equiv \max_{X \in \Operators,\, X \neq 0} \frac{\norm{\hat V X}_1}{\norm{X}_1} .
\end{equation}
Throughout this letter, we use the symbol $\rho$ to denote the state of the system, i.e., a positive operator in $\Operators$ satisfying $\tr \rho = 1$.

\section{Time-Independent Lindblad Generators} \label{sec:no_driving}

In this section, we sketch Spohn's proof of Theorem~\ref{thm:spohn} \cite{SpohnLettMathPhys1977}.
We consider an open quantum system obeying the GKSL equation \eqref{eq:lindblad} with constant generator $\hat L_t = \hat L$.
The formal solution of this differential equation is given by $\rho_t = \hat V_t\, \rho_0$, where the set of propagators $\hat V_t \equiv \ee^{\hat L t}$ forms a quantum dynamical semigroup.
To prove Theorem~\ref{thm:spohn}, we will show that $\hat L$ has exactly one zero eigenvalue and that all other eigenvalues of $\hat L$ have negative real parts.

Recall that $\mathcal A \equiv \lspan{ A^1 \dots A^M }$ denotes the linear span of the Lindblad operators.
Since $\mathcal A$ is self-adjoint by assumption, there is an orthonormal set of Hermitian operators $\{ F^1 \dots F^m \} \subset \Operators$ with $m \leq M$ that spans $\mathcal A$, i.e., we have $A^\mu \equiv \sum_{\alpha=1}^m c_{\mu\alpha} F^\alpha$ for some complex coefficients $c_{\mu\alpha}$.
Upon inserting this expansion into \eqref{eq:lindblad}, the Lindblad generator takes the form
\begin{equation} \label{eq:lindblad_expanded}
	\hat L \rho = \frac{1}{\ii\hbar} \comm{H}{\rho} + \sum_{\alpha,\beta = 1}^m B_{\alpha\beta}\, \Big( F^\alpha \rho F^\beta - \frac 1 2 F^\beta F^\alpha \rho - \frac 1 2 \rho F^\beta F^\alpha \Big) ,
\end{equation}
where $B_{\alpha\beta} \equiv \sum_{\mu=1}^M \gamma^\mu c_{\mu\alpha} c_{\mu\beta}^\ast$.
By construction, the coefficient matrix $B_{\alpha\beta}$ is positive definite.
Hence, its smallest eigenvalue $b$ is strictly positive.
We now separate the diagonal contribution
\begin{equation} \label{eq:Ls}
	\hat L_{\mathrm{d}}\, \rho \equiv \frac b 2\, \sum_{\alpha=1}^m  \Big( F^\alpha \rho F^\alpha - \frac 1 2 F^\alpha F^\alpha \rho - \frac 1 2 \rho F^\alpha F^\alpha \Big)
\end{equation}
from the generator \eqref{eq:lindblad_expanded}, such that $\hat L \equiv \hat L_{\mathrm{d}} + \hat L_{\mathrm{r}}$.

Using the irreducibility of $\mathcal A$, it is straightforward to show that the superoperator $\hat L_{\mathrm{d}}$ has exactly one zero eigenvalue corresponding to the eigenvector $\mathbbm 1$, and that its remaining eigenvalues are negative \cite{SpohnLettMathPhys1977}.
We now denote by $-\lambda$ the maximum non-vanishing eigenvalue of $\hat L_{\mathrm{d}}$ and by $\hat L^\prime$, $\hat L_{\mathrm{d}}^\prime$ and $\hat L_{\mathrm{r}}^\prime$ the restrictions of $\hat L$, $\hat L_{\mathrm{d}}$ and $\hat L_{\mathrm{r}}$ to $\mathbbm 1^\perp$, the orthogonal complement of $\mathbbm 1$ in $\Operators$, i.e., the subspace of traceless operators.
Since $\hat L_{\mathrm{d}}$ is self-adjoint, we can conclude that
\begin{equation} \label{eq:LdPrime}
	\norm{ \exp[ \hat L_{\mathrm{d}}^\prime t ]\, }_1^\prime \leq \ee^{-\lambda t} .
\end{equation}
Here, $\norm{\bullet}_1^\prime$ is the operator norm with respect to the trace-norm on $\mathbbm 1^\perp$.

Next, we consider the remaining generator $\hat L_{\mathrm{r}}$.
By construction, this superoperator is still the generator of a quantum dynamical semigroup.
Hence, the corresponding propagator $\exp[ \hat L_{\mathrm{r}} t ]$ is completely positive and trace preserving and therefore con\-trac\-tive, i.e., $\norm{ \exp[ \hat L_{\mathrm{r}} t ]\, }_1 = 1$ \cite{Rivas2012}.
Since the image of $\hat L_{\mathrm{r}}$ is contained in $\mathbbm 1^\perp$, we find that
\begin{equation}
	\norm{ \exp[ \hat L_{\mathrm{r}}^\prime t ]\, }_1^\prime \leq \norm{ \exp[ \hat L_{\mathrm{r}} t ]\, }_1 = 1 .
\end{equation}
Combining this bound with \eqref{eq:LdPrime} by means of the Lie product formula \cite{Reed1981}, we obtain
\begin{equation} \label{eq:1norm}
	\norm{ \exp[ \hat L' t ]\, }_1^\prime \leq \lim_{N \to \infty} \Big( \Norm{ \, \exp[\hat L_{\mathrm{s}}^\prime\, t/N]\, \exp[\hat L_{\mathrm{a}}^\prime\, t/N]\, }_1' \Big)^N \leq \ee^{-\lambda t} .
\end{equation}
Thus, all eigenvalues of $\hat L^\prime$ have negative real parts.

Finally, we choose an orthonormal basis of $\Operators$ with $(\dim \Hilbert)^{-1/2}\, \mathbbm 1$ being the first basis element.
The matrix representation of $\hat L$ in such a basis has the form
\begin{equation}
	\hat L = \Bigg[\! \begin{array}{cc}
			0 & 0 \cdots 0 \\
			\ast & (\hat L^\prime)
		\end{array} \!\Bigg] ,
\end{equation}
where $\ast$ denotes unknown entries. 
Examining the characteristic polynomial of this matrix proves first that $\hat L$ has a single zero eigenvalue and second that the remaining eigenvalues are identical to the eigenvalues of $\hat L^\prime$, which have negative real parts.
The proof of Theorem~\ref{thm:spohn} is thus complete.

\section{Periodic Lindblad Generators} \label{sec:driving}

We now move on to driven systems with a time-periodic Lindblad generator $\hat L_{t+T} = \hat L_t$.
The propagator $\hat V_{t,t_0}$, which maps the initial state $\rho_{t_0}$ to the later state $\rho_t$, here depends on both the initial and the final time.
Our goal is to show that $\rho_t$ approaches a limit cycle $\rho^{\mathrm{cy}}_{t+T} = \rho^{\mathrm{cy}}_t$ at long times.
A direct approach to this problem is, however, complicated by the fact that the limit cycle can generally not be explicitly determined.

To circumvent this difficulty, we work in the Heisenberg picture, where the state $\rho_0$ is constant and the observables $X \in \Operators$ carry the time dependence $X_t = V_{t,0}^\dagger X_0$.
Note that the time evolution of $X_t$ does generally not follow from a time-local differential equation \cite{Breuer2002}.
In order to prove Theorem~\ref{thm:us}, it suffices to show that any observable $X_t$ becomes a multiple of the identity at long times, $X_t = x_t \mathbbm 1$ for some scalar $x_t = x_{t+T}$; the expectation value $\langle X_t \rangle \equiv \trbr{X_t\, \rho_0}$ then becomes periodic and independent of the initial state.

The eigenvalues of $\hat L_t^\dagger$ are the complex conjugates of the eigenvalues of $\hat L_t$.
Therefore, they all have a negative real part, except for a single zero eigenvalue with constant eigenvector $\mathbbm 1$. 
This fact is, however, not sufficient to conclude that $X_t$ becomes a multiple of the identity at long times, since the adjoint generators $\hat L_{t_1}^\dagger$ and $\hat L_{t_2}^\dagger$ at different times $t_1$ and $t_2$ do not commute with each other in general.
Instead, the strategy of our proof is to introduce a norm for $X_t'$, the orthogonal projection of $X_t$ on $\mathbbm 1^\perp$, which strictly decreases whenever $\hat L_t$ satisfies the conditions of Theorem~\ref{thm:spohn}.

To this end, we first consider a general subspace $\Operators'$ of $\Operators$, and define
\begin{equation} \label{eq:norm}
	\norm{X}_\infty \equiv \max \left\{ \abs{ \sk Y X } : \norm{Y}_1 = 1, Y \in \Operators' \right\}
\end{equation}
for $X \in \Operators'$.
It is easy to show that \eqref{eq:norm} indeed defines a norm on $\Operators'$.
Note that \eqref{eq:norm} coincides with the usual definition of the operator norm $\norm{\bullet}_\infty$ in the case $\Operators' = \Operators$.

\begin{lemma} \label{lemma}
	Let $\hat U: \Operators' \to \Operators'$ be a superoperator on the subspace $\Operators'$.
	Then $\norm{\hat U}_\infty$, the operator norm of $\hat U$ with respect to $\norm{\bullet}_\infty$, satisfies
	\begin{equation}
		\norm{\hat U}_\infty \leq \norm{\hat U^\dagger}_1' ,
	\end{equation}
	where $\norm{\bullet}_1'$ is the operator norm with respect to the trace-norm on $\Operators'$.
\end{lemma}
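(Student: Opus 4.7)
The plan is to use duality: the norm $\norm{\bullet}_\infty$ is defined precisely as the dual of $\norm{\bullet}_1$ on $\Operators'$, so the inequality should reduce to transferring the action of $\hat U$ to $\hat U^\dagger$ under the Hilbert--Schmidt pairing and applying the duality bound. I would first note that since $\hat U$ maps $\Operators'$ to $\Operators'$, the restricted adjoint $\hat U^\dagger:\Operators'\to\Operators'$ is well-defined via the Riesz representation theorem inside the Hilbert space $\Operators'$: for each $Y\in\Operators'$, the functional $X\mapsto\sk Y{\hat U X}$ on $\Operators'$ is represented by a unique $\hat U^\dagger Y\in\Operators'$, so the quantity $\norm{\hat U^\dagger}_1'$ is meaningful.

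Next I would establish the elementary duality inequality
\begin{equation}
	\abs{\sk Y X} \leq \norm{Y}_1 \, \norm{X}_\infty \qquad\text{for all } X,Y\in\Operators'
\end{equation}
directly from definition \eqref{eq:norm}, by rescaling $Y$ to unit trace-norm.

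The main computation is then a one-line chain: for any nonzero $X\in\Operators'$ and any $Y\in\Operators'$ with $\norm{Y}_1=1$,
\begin{equation}
	\abs{\sk Y{\hat U X}} = \abs{\sk{\hat U^\dagger Y}{X}} \leq \norm{\hat U^\dagger Y}_1 \, \norm{X}_\infty \leq \norm{\hat U^\dagger}_1'\, \norm{X}_\infty .
\end{equation}
Taking the maximum over such $Y$ on the left-hand side gives $\norm{\hat U X}_\infty \leq \norm{\hat U^\dagger}_1'\, \norm{X}_\infty$, and dividing by $\norm{X}_\infty$ and taking the supremum over $X$ yields the claimed bound.

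I do not expect a substantial obstacle; the only subtle point is making sure the adjoint is taken within $\Operators'$ rather than on all of $\Operators$, since otherwise the relevant trace-norm operator norm would not be $\norm{\cdot}_1'$. Once that is clarified, the rest is a direct application of the defining maximization in \eqref{eq:norm} combined with the adjoint relation $\sk Y{\hat U X}=\sk{\hat U^\dagger Y}{X}$.
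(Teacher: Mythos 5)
Your proposal is correct and follows essentially the same route as the paper: both proofs move $\hat U$ onto $Y$ via the Hilbert--Schmidt adjoint and then bound $\abs{\sk{\hat U^\dagger Y}{X}}$ by $\norm{\hat U^\dagger Y}_1\,\norm{X}_\infty$ (the paper does this by explicitly normalizing $\hat U^\dagger Y$, you by stating the equivalent duality inequality). Your remark about taking the adjoint within $\Operators'$ is a fair point of care that the paper leaves implicit.
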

\begin{proof}
	For $Y \in \Operators'$ with $\norm{Y}_1 = 1$, we first define $N(Y) \equiv \hat U^\dagger Y /\, \norm{\hat U^\dagger Y}_1$.
	If $\hat U^\dagger Y = 0$, we set $N(Y) = Y$ such that $\hat U^\dagger Y = \norm{\hat U^\dagger Y}_1\, N(Y)$ still holds.
	Note that $N(Y) \in \Operators'$ with $\norm{N(Y)}_1 = 1$.
	Let $X \in \Operators'$, then
	\begin{eqnarray}
		\norm{\hat U X}_\infty &= \max \left\{ \abs{ \sk Y {\hat U X} } : \norm{Y}_1 = 1, Y \in \Operators' \right\} \nonumber \\
			&= \max \left\{ \norm{\hat U^\dagger Y}_1\, \abs{ \sk {N(Y)} X } : \norm{Y}_1 = 1, Y \in \Operators' \right\} \nonumber \\
			&\leq \norm{\hat U^\dagger}_1'\, \norm{X}_\infty . 
	\end{eqnarray}
	It follows that $\norm{\hat U}_\infty \leq \norm{\hat U^\dagger}_1'$.
\end{proof}

We are now ready to prove Theorem~\ref{thm:us}.
We assume that the generator $\hat L_t$ satisfies the conditions of Theorem~\ref{thm:spohn} for $0 \leq t \leq \tau$.
The adjoint propagator over this interval is given by the ordered exponential
\begin{equation} \label{eq:toe}
	V_{\tau,0}^\dagger = \lim_{N \to \infty} \left( \exp[ \hat L_{\Delta t}\, \Delta t ]^\dagger\; \exp[ \hat L_{2\Delta t}\, \Delta t ]^\dagger\; \cdots\; \exp[ \hat L_{N \Delta t}\, \Delta t ]^\dagger \right) ,
\end{equation}
where $\Delta t \equiv \tau / N$.
Upon choosing an orthonormal basis of $\Operators$ with $(\dim \Hilbert)^{-1/2}\, \mathbbm 1$ as its first element, the matrix representation of a single time-slice of the propagator \eqref{eq:toe} becomes
\begin{equation} \label{eq:slice}
	\exp[ \hat L_{k \Delta t}\, \Delta t ]^\dagger = \left[\! \begin{array}{cc}
			1 & \ast \\
			\vec 0 & \left(\, \exp[ \hat L_{k \Delta t}^\prime\, \Delta t ]^\dagger \,\right)
		\end{array} \!\right] ,
\end{equation}
where $\vec 0$ is a column vector with zero entries and $\hat L_t'$ is the restriction of $\hat L_t$ on $\mathbbm 1^\perp$.

We now apply the adjoint propagator $V_{\tau,0}^\dagger$ to an arbitrary observable $X_0 \in \Operators$.
Decomposing $X_t$ as $X_t \equiv x_t\, \mathbbm 1 + X'_t$ with $X'_t \in \mathbbm 1^\perp$ and using \eqref{eq:slice} yields
\begin{equation}
	X'_\tau = \lim_{N \to \infty} \left( \exp[ \hat L_{\Delta t}'\, \Delta t ]^\dagger\; \exp[ \hat L_{2\Delta t}'\, \Delta t ]^\dagger\; \cdots\; \exp[ \hat L_{N \Delta t}'\, \Delta t ]^\dagger \right) X'_0 .
\end{equation}
This expression makes it possible to derive an upper bound on $\norm{X'_\tau}_\infty$ in terms of $\norm{X'_0}_\infty$.
Here, the role of the subspace $\Operators'$ in the definition \eqref{eq:norm} of the $\infty$-norm is played by $\mathbbm 1^\perp$.
Recalling \eqref{eq:1norm}, we find that $\norm{ \exp[ \hat L_t'\, \Delta t ]\, }_1' \leq \ee^{-\lambda_t\, \Delta t}$ for some $\lambda_t > 0$, since $\hat L_t$ satisfies the conditions of Theorem~\ref{thm:spohn}.
Applying Lemma~\ref{lemma} to $\hat U = \exp[ \hat L_t'\, \Delta t ]^\dagger$ in every individual time-slice, we thus obtain
\begin{equation}
	\norm{X'_\tau}_\infty \leq \lim_{N \to \infty} \ee^{-\lambda_{\Delta t}\, \Delta t}\, \ee^{-\lambda_{2\Delta t}\, \Delta t}\, \cdots\, \ee^{-\lambda_{N \Delta t}\, \Delta t}\, \norm{X'_0}_\infty .
\end{equation}

The value of $\lambda_t$ can be found at each time $t$ by following the procedure described in \Sref{sec:no_driving}.
By assumption, $\hat L_t$ is continuous for $0 \leq t \leq \tau$.
Furthermore, we can assume without loss of generality that $\dim \mathcal A_t$ is constant throughout this time interval.
Therefore, we are free to choose the self-adjoint operators $F^\alpha_t$ appearing in the decomposition \eqref{eq:lindblad_expanded} as continuous functions of time.
By construction, the diagonal generator $\hat L_{\mathrm{d},t}$ and its largest non-zero eigenvalue, $-\lambda_t$, are then also continuous.
Therefore, $\Lambda \equiv \min_{0 \leq t \leq \tau} \lambda_t$ is strictly positive and we can conclude that
\begin{equation}
	\norm{X'_\tau}_\infty \leq \ee^{-\Lambda\tau}\, \norm{X'_0}_\infty .
\end{equation}
Hence, the $\infty$-norm of $X'$ strictly decreases under the action of the adjoint propagator $\hat V_{\tau,0}^\dagger$.

The propagator over the remaining part of the period, $\hat V_{T,\tau}^\dagger$, can be treated analogously.
Here, we use that $\norm{ \exp[ \hat L_{t}^\prime\, \Delta t ]\, }_1^\prime \leq 1$ for any Lindblad generator $\hat L_t$, i.e., the propagator $\hat V_{T,\tau}^\dagger$ can only decrease the $\infty$-norm of the $\mathbbm 1^\perp$-component.
For the time evolution over a full period, we thus obtain
\begin{equation}
	\norm{X'_T}_\infty \leq \e^{-\Lambda \tau} \norm{X'_0}_\infty ,
\end{equation}
where $X_T = \hat V_{\tau,0}^\dagger \hat V_{T,\tau}^\dagger\, X_0$.
It follows that $X_t = x_t \mathbbm 1$ at long times.

It remains to prove that $x_{t+T} = x_t$.
To this end, we compare the expressions
\begin{equation}
	X_t = \hat V_{t,0}^\dagger\, X_0
	\qquad \textrm{and} \qquad
	X_{t+T} = \hat V_{T,0}^\dagger\, \hat V_{t+T,T}^\dagger\, X_0 .
\end{equation}
The propagators are invariant under a global time shift $T$ due to the periodicity of $\hat L_t$, i.e., $\hat V_{t+T,T}^\dagger = \hat V_{t,0}^\dagger$.
Since $\mathbbm 1$ is an eigenvector of $\hat V_{T,0}^\dagger$, we obtain $X_{t+T} = \hat V_{T,0}^\dagger\, X_t = X_t$ at long times.
Thus, our proof of Theorem~\ref{thm:us} is complete.

\section{Concluding Remarks} \label{sec:conclusions}

In this letter, we have shown that a periodically driven open quantum system approaches a unique limit cycle if the dissipative terms of the GKSL equation mix all subspaces of the system Hilbert space during a finite fraction $\tau/T$ of each driving period.
In addition, our proof provides the lower limit $\Lambda\tau / T$ on the average rate of approach to the limit cycle, where the characteristic rate $\Lambda$ can be calculated from the Lindblad generator $\hat L_t$.
We note, however, that our proof of the existence of the limit cycle is not constructive.
In fact, the limit cycle can be determined explicitly only for specific systems \cite{FeldmannPhysRevE2004,BrandnerPhysRevE2016,KosloffEntropy2017,ScopaPhysRevA2018}; further characterizing the properties of these states on a general level appears to be impossible.

While we have here focussed on periodically driven systems, our method can be applied also for non-periodic, e.g., quasi-periodic, driving.
To this end, we introduce the following generalization of the rate $\lambda_t$ to times $t$ where $\hat L_t$ does not satisfy the conditions of Theorem~\ref{thm:spohn}:
	if the span of all Lindblad operators at the time $t$ is not self-adjoint, we set $\lambda_t = 0$.
Otherwise, we can define the diagonal contribution $\hat L_{\mathrm{d},t}$ as in Section~\ref{sec:no_driving}, and $-\lambda_t \leq 0$ is the largest eigenvalue of $\hat L_{\mathrm{d},t}'$, i.e., of the restriction to the subspace of traceless operators.
The following corollary then follows along the lines of Theorem~\ref{thm:us}:

\begin{cor}
	For a system obeying the GKSL equation \eqref{eq:lindblad}, let $\lambda_t$ be defined as described above.
	If
	\begin{equation}
		\int_0^\infty \lambda_t \id t = \infty ,
	\end{equation}
	the system is relaxing, i.e., its behavior at long times is independent of the initial conditions.
\end{cor}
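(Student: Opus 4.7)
The plan is to follow the strategy of the proof of Theorem~\ref{thm:us}, but to apply the slicing argument directly over the entire half-line rather than over a single period. I would work in the Heisenberg picture and decompose every observable as $X_t = x_t \mathbbm 1 + X_t'$ with $X_t' \in \mathbbm 1^\perp$. Because $\mathbbm 1$ is a fixed point of every adjoint generator $\hat L_t^\dagger$, the scalar component is never mixed into $X_t'$, and the expectation value splits as $\langle X_t \rangle = x_t + \sk{X_t'}{\rho_0}$. Since $x_t$ and $X_t'$ depend only on $X_0$ and the propagator, not on $\rho_0$, it suffices to show that $\norm{X_t'}_\infty \to 0$ as $t \to \infty$ to conclude that long-time expectation values are independent of the initial state, which is the asserted relaxation property.

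To establish this decay, I would partition $[0,t]$ into $N$ slices of width $\Delta t = t/N$ and expand the adjoint propagator as an ordered product analogous to \eqref{eq:toe}. On slices where $\mathcal A_{k\Delta t}$ is self-adjoint and irreducible, combining the analysis of Section~\ref{sec:no_driving} with Lemma~\ref{lemma} yields the single-slice estimate
\begin{equation*}
\norm{ \exp[ \hat L_{k\Delta t}'\, \Delta t ]^\dagger }_\infty \leq \norm{ \exp[ \hat L_{k\Delta t}'\, \Delta t ] }_1' \leq \ee^{-\lambda_{k\Delta t}\, \Delta t} .
\end{equation*}
On every other slice, $\lambda_{k\Delta t} = 0$ by definition, and the associated single-slice propagator is contractive in both norms, so the corresponding factor is bounded trivially by $\ee^0 = 1$. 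Multiplying the per-slice bounds and sending $N \to \infty$ should then produce $\norm{X_t'}_\infty \leq \exp\!\bigl({-\int_0^t \lambda_s \id s}\bigr)\, \norm{X_0'}_\infty$. Once the exponent diverges with $t$, the right-hand side vanishes and the corollary follows.

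The main obstacle is that, in contrast to Theorem~\ref{thm:us}, we may no longer assume that $\hat L_t$ is continuous or that $\dim \mathcal A_t$ is locally constant, so the globally smooth choice of operators $F_t^\alpha$ underlying the construction of $\hat L_{\mathrm d, t}$ is not available. I would address this by partitioning $[0,\infty)$ into subintervals on which the structure of $\mathcal A_t$ is fixed and $\hat L_t$ is continuous, applying the slicing argument separately on each such subinterval, and then chaining the resulting bounds through the multiplicativity of the propagators; any residual intervals of pathological behavior contribute merely the trivial contractive factor. A subsidiary subtlety is measurability: the hypothesis $\int_0^\infty \lambda_t \id t = \infty$ implicitly requires $\lambda_t$ to be a well-defined measurable function of $t$, which in practice calls for at least piecewise-continuous dependence of $\hat L_t$ on time.
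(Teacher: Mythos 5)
Your proposal is correct and follows essentially the same route the paper intends: the paper gives no separate proof of the corollary, stating only that it ``follows along the lines of Theorem~\ref{thm:us}'', and your elaboration --- slicing the adjoint propagator over $[0,t]$, bounding each slice by $\ee^{-\lambda_{k\Delta t}\,\Delta t}$ via Lemma~\ref{lemma} (trivially when $\lambda_{k\Delta t}=0$), and accumulating the factors into $\exp[-\int_0^t \lambda_s \id s]$ --- is exactly that argument. Your closing remarks on measurability and piecewise regularity of $\lambda_t$ are a sensible, if implicit-in-the-paper, caveat.
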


\begin{figure}
	\centering
	\includegraphics[scale=1]{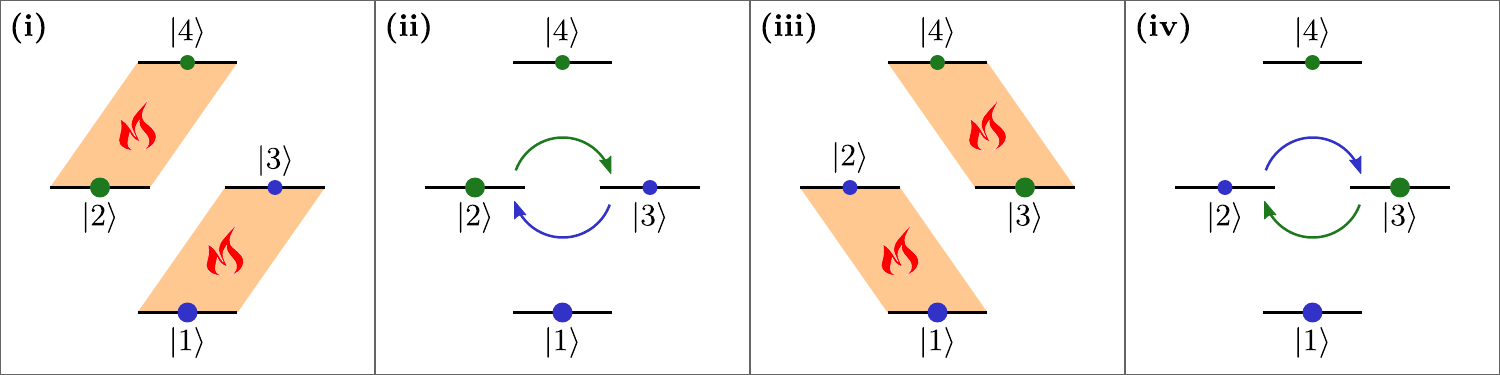}
	\caption[bla]{
		Periodically driven quantum four-level system without a unique limit cycle.
		The states $\ket 1$ through $\ket 4$ are indicated by horizontal lines in the diagrams; the filled circles on the lines indicate the level populations at the beginning of each step.
		We consider a cyclic four-step protocol:
		\begin{enumerate*}
		\item The populations of the pairs of states $(\ket 1, \ket 3)$ and $(\ket 2, \ket 4)$ thermalize separately.
			That is, the time evolution during this step follows \eqref{eq:lindblad} with constant Lindblad generator $\hat L_t = \hat L$.
			The generator contains four dissipative terms with the Lindblad operators $\ketbra 1 3$, $\ketbra 2 4$, and their adjoints.
		\item The populations of the states $\ket 2$ and $\ket 3$ are swapped unitarily.
		\item The populations of the pairs $(\ket 1, \ket 2)$ and $(\ket 3, \ket 4)$ of states thermalize separately.
		\item The populations of the states $\ket 2$ and $\ket 3$ are swapped unitarily.
		\end{enumerate*}
		The ratio between the populations marked in green and the populations marked in blue is determined by the initial conditions and remains unaltered over one period.
		Hence, the long time state cannot be unique.
	}
	\label{fig}
\end{figure}

Finally, it is a natural question to ask whether the conditions of Theorem~\ref{thm:us} could be weakened.
A weaker condition might, for example, only require that $\bigcup_{t \in [0,T]} \mathcal A_t$ is irreducible, i.e., that the dissipative dynamics connects all subspaces over the course of one driving period.
A simple counterexample, illustrated in \Fref{fig}, shows, however, that this condition is not sufficient for the limit cycle to be unique.
Hence, potential generalizations of Theorem~\ref{thm:us} would most likely require a closer analysis of the interplay between the unitary and the dissipative parts of the GKSL equation.
It remains a challenge for future investigations to settle the question whether such extensions can be formulated in terms of simple algebraic conditions.

\ack

K.~B. acknowledges support from the Academy of Finland (Contract No.~296073).
This work was supported by the Academy of Finland (projects No.~308515 and 312299).
All authors are associated with the Centre for Quantum Engineering at Aalto University.

\newcommand{\newblock}{}

\end{document}